\newcommand{\ra}{\rightarrow}
\newcommand{\R}{\mathbb{R}}
\newtheorem{theorem}{Theorem}
\newcommand{\runningexample}[1]%
{
\textbf{Running example:}
\textit{#1}
}
\begin{document}

\title{Eyes-Closed Safety Kernels: \\ 
Safety of Autonomous Systems Under Loss of Observability}



\author{

\authorblockN{
Forrest Laine, Chih-Yuan Chiu, and Claire Tomlin
}
}
\maketitle

\begin{abstract}
A framework is presented for handling a potential loss of observability of a dynamical system in a provably-safe way. Inspired by the fragility of data-driven perception systems used by autonomous vehicles, we formulate the problem that arises when a sensing modality fails or is found to be untrustworthy during autonomous operation. We cast this problem as a differential game played between the dynamical system being controlled and the external system factor(s) for which observations are lost. The game is a zero-sum Stackelberg game in which the controlled system (leader) is trying to find a trajectory which maximizes a function representing the safety of the system, and the unobserved factor (follower) is trying to minimize the same function. The set of winning initial configurations of this game for the controlled system represent the set of all states in which safety can be maintained with respect to the external factor, even if observability of that factor is lost. This is the set we refer to as the Eyes-Closed Safety Kernel. In practical use, the policy defined by the winning strategy of the controlled system is only needed to be executed whenever observability of the external system is lost or the system deviates from the Eyes-Closed Safety Kernel due to other, non-safety oriented control schemes. We present a means for solving this game offline, such that the resulting winning strategy can be used for computationally efficient, provably-safe, online control when needed. The solution approach presented is based on representing the game using the solutions of two Hamilton-Jacobi partial differential equations. We illustrate the applicability of our framework by working through a realistic example in which an autonomous car must avoid a dynamic obstacle despite potentially losing observability.
\footnote{This research is supported by the DARPA Assured Autonomy program.\\ All authors are with the Department of Electrical Engineering and Computer Sciences at UC Berkeley. Contact info: \texttt{forrest.laine@berkeley.edu}, \texttt{chihyuan\_chiu@berkeley.edu}, \texttt{tomlin@eecs.berkeley.edu}}
\end{abstract}

\IEEEpeerreviewmaketitle


\section{Introduction} \label{sec:intro}

Before deploying autonomous systems into uncontrolled environments, a degree of confidence in the ability of those systems to operate safely and proficiently is obviously needed. This need is exacerbated for safety-critical systems that can become dangerous in their failure modes.  One way to generate confidence in a system is to prove safety properties for a model of the system. Although such an exercise is imperfect---the modeling assumptions on the system may not hold in practice---the space over which the assumptions do hold becomes a validated operating region in which the performance of the system is guaranteed.   

In this article, we present a method for generating a control scheme for which the safety of the modeled system can be proven, without assuming the \textit{persistence of observability} of the system. In system analysis, an observation model for the system is assumed. Common variants of observation models include full observations (all state dimensions of the system are observed), partial observations (only some state dimensions are observed), and noisy observations (observations are corrupted by noise). Typically, however, these observation models are assumed to be constant throughout the mission of the autonomous system. We relax that assumption to allow for cases in which the observability of the system (or part of the system) might suddenly be lost. 

\begin{figure}[t] 
  \centering
  \smallskip
  \includegraphics[scale=0.125]{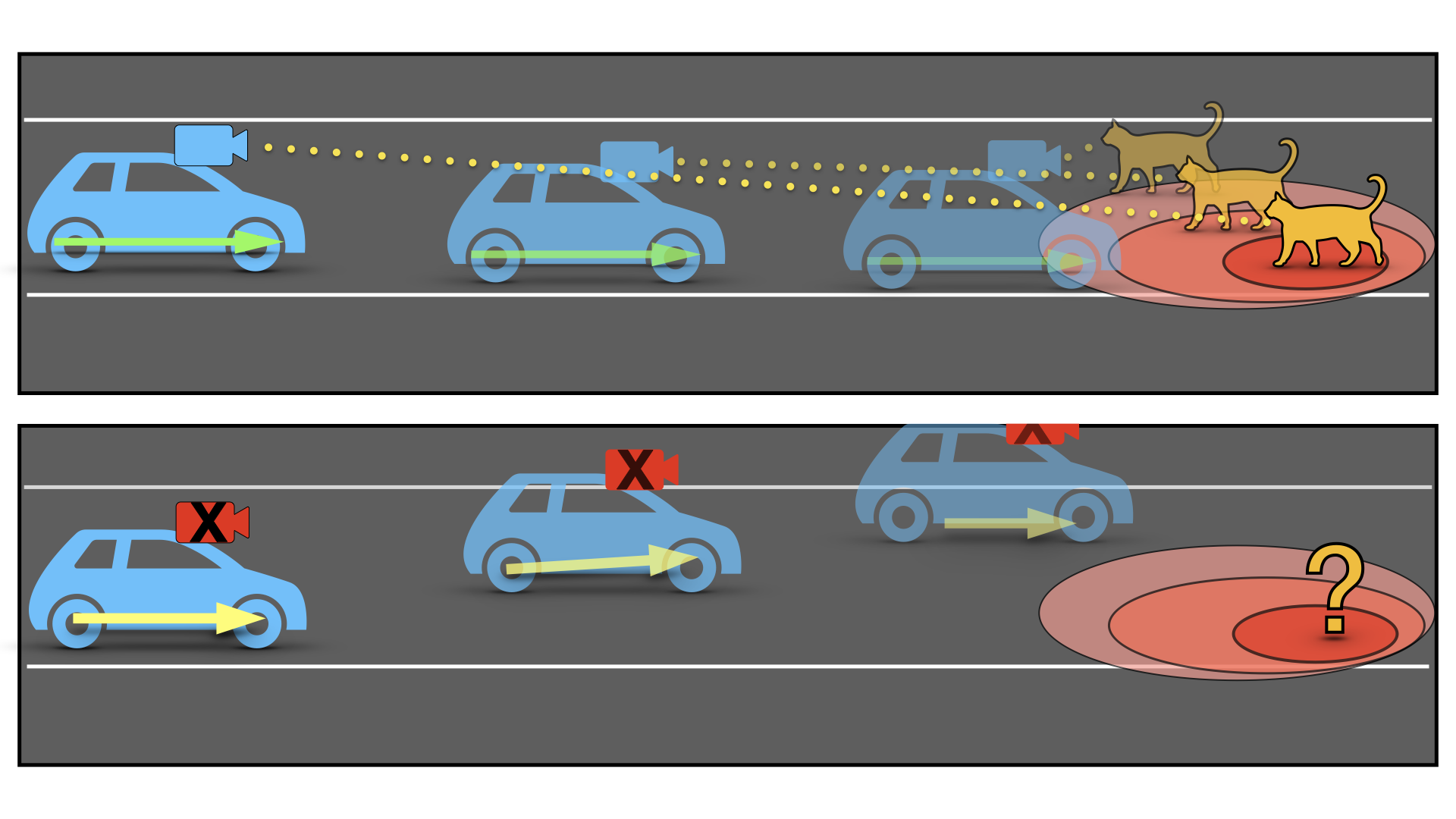} 
  \caption{Graphic Depiction of the Loss of Observability Situation. When planning avoidance maneuvers, the vehicle must assume what type of observations it will receive of the cat in the future. If it assumes it will continue to observe the cat for the duration of the trajectory, it can plan a maneuver which relies on the ability to sense and avoid the cat. This situation is depicted in the top pane of the graphic. At the time the vehicle reaches the location of the cat, it avoids the cat through a region that the cat \textit{could have been in}, but was detected to not be in. Alternatively, if the vehicle assumes it will not receive further observations, or might lose them at any time, its avoidance trajectories will be distinctly different. These maneuvers must avoid all regions in which the cat might be, because the vehicle cannot observe it and determine its exact location. The trajectory in the top pane might be preferred, but might not be safe if observations are lost halfway through the maneuver. The difference between these two models is subtle but important. The design of controllers which are safe with respect to this latter paradigm is the topic of this article.}
  \label{figure:comic}
\end{figure}

Losses of system observability have always been an important consideration, as sensors can fail for various reasons, but accounting for these losses has become increasingly necessary as the perception modules used in modern autonomous vehicles have transitioned to data-driven components. It is very hard to verify that these components will always function as intended, especially if the input data encountered at run-time is dissimilar to the data for which those systems were trained on. 
This poses a significant issue for safety-critical systems. We would like to be able to leverage the advances and strengths of data-driven techniques, yet we desire that the autonomous systems that we develop are still provably safe. 

One approach to developing such systems might derive from \textit{anomaly detection}, which aims to predict whether the input data is anomalous with respect a previously-verified data set, and therefore has a high risk of causing the sensing modality to fail. This is potentially an easier problem than guaranteeing correctness over geometric regions of the data space. If it is accepted that the sensing modality can fail unexpectedly during operation, care must be taken so that the autonomous system is able to maintain safe and proficient operation despite such failures potentially happening during dynamic situations.


The purpose and contribution of this article is the design of a control scheme which can guarantee safe handling of a detected loss of observability of the autonomous system or factors it interacts with. We design such a controller by formulating and solving a differential game representing the situation of interest. We do \textit{not} address the challenge of correctly detecting anomalous inputs to the data-driven algorithms (or other predictions of failure), and instead point to the growing literature of methods interested in addressing that problem, such as \cite{balakrishnan2019hypothesis}, \cite{liang2017enhancing}, \cite{hendrycks2016baseline}, \cite{hendrycks2018deep}, \cite{chalapathy2019deep}, \cite{choi2018generative} to list just a few. The framework presented here is complementary to those methods, offering a mathematically sound means of handling such detected anomalies in the context of autonomous systems. 

A graphical overview of the problem formulation is depicted in Figure \ref{figure:comic}. We use this graphic to reiterate the key difference between the method presented here and that of the standard safe obstacle avoidance paradigms. Typical formulations of obstacle avoidance can similarly be cast as a differential game such as in \cite{fisac2015reach}, although those and other formulations assume (to the best of our knowledge) that the observation model is persistent. \textbf{Under the standard assumption, avoidance trajectories are planned assuming that the system will be able to sense and react to obstacles in the future. This is in contrast to the framework we present, where avoidance trajectories are generated assuming that \emph{observations will potentially be lost at any instant}}. Again, to the best of our knowledge, this is the first work to account for such situations in a provably safe way.  

Before presenting this framework in detail, we summarize the contributions we make: 
\begin{itemize}
\item We develop a mathematical formulation of the loss of observability scenario as a zero-sum, differential Stackelberg game.
\item We develop a novel means of computing a global solution to the game by showing how it can be equivalently expressed through two related partial differential equations. We derive this alternative representation and demonstrate how the global solution can be solved for numerically. 
\item We work through an example demonstrating the practical applicability of our approach to robotic systems. Particularly, we indicate how an autonomous vehicle can (in real time) avoid collision with a dynamic obstacle when an unreliable perception module is responsible for tracking it. 
\end{itemize}

\section{Formulation} \label{sec:formulation}
We begin presentation of our framework by first introducing a model of the types of autonomous systems we wish to consider. In this section and the remainder of this article, we refer to this model as the system of consideration itself, although it should be kept in mind that all properties stated are either modeling assumptions or properties derived from the model. 

To represent the problem of interest, we assume that the full state of the autonomous system can be decomposed into two independent sub-systems, which we refer to as the internal and external systems. We assume that the internal and external systems have independent controls and state evolutions, which depend only on their respective states and controls. We assume that both systems are able to observe their own state at all times, that the external system can observe the internal system at all times, and that the internal system can observe the external system in an impersistent manner, meaning observations can be lost and regained at arbitrary times. We take the perspective of the internal system, and aim to control it in a manner that maintains a safety specification involving both systems, robust to potentially  impersistent observations of and adversarial control by the external system. 

An example of a system with such a decomposition can be seen in Figure \ref{figure:comic}, in which the vehicle is the internal system and the obstacle (cat) is the external system. There, the steering rate and acceleration of the vehicle might be the internal controls, and the directional velocity of the cat might be the external control. Because the vehicle is tracking the cat using an unreliable perception module, the observations of the cat are assumed to be impersistent. 

Assume without loss of generality that time begins at some $t_0 \geq 0$, at which point the state of both systems are known to both systems, within some bounded error tolerance. 
Define $x_{int} \in \R^{n_{int}}$ and $x_{ext} \in \R^{n_{ext}}$ as the internal and external states, respectively, evolving according to the dynamics:
\begin{align} \label{Eqn: Dynamics of Pursuer and Evader}
    \dot{x}_{int}(t) &= f_{int}(x_{int}(t), u_{int}(t)), &t \in [t_0, t_f] \\ \nonumber
    \dot{x}_{ext}(t) &= f_{ext}(x_{ext}(t), u_{ext}(t)), &t \in [t_0, t_f]
\end{align}
with control signals drawn from the set of measurable functions:
\begin{equation}
    \begin{split}
        u_{int}(\cdot) \in \mathbb{U}_{int} = \{\phi_{int}: \hspace{0.5mm} &[t_0, t_f] \ra \mathcal{U}_{int}: \\
        &\phi_{int} \text{ is measurable} \}, \\
        u_{ext}(\cdot) \in \mathbb{U}_{ext} = \{\phi_{ext}: \hspace{0.5mm} &[t_0, t_f] \ra \mathcal{U}_{ext}: \\
        &\phi_{ext} \text{ is measurable} \}.
    \end{split}
\end{equation}
where $\mathcal{U}_{int} \subset \R^{m_{int}}$ and $\mathcal{U}_{ext} \subset \R^{m_{ext}}$ are compact subsets of the input spaces $\R^{m_{int}}, \R^{m_{ext}}$. 

We assume that the function representing the dynamics of the internal system, $f_{int}: \R^{n_{int}} \times \mathcal{U}_{int} \ra \R^{n_{int}}$, is uniformly continuous, bounded, and Lipschitz continuous in its first argument. As such, for every initial external system state $x_{int0} \in \R^{n_{int}}$ and internal control signal $u_{int}(\cdot) \in \mathbb{U}_{int}$, then as stated in \cite{EarlA.Coddington1955}, there exists a unique trajectory $\xi_{int}(t; t_0, x_{int0}, u_{int}(\cdot))$ satisfying:
\begin{align} 
\begin{split}
\frac{d}{dt} \xi_{int}(t; t_0, x_{int0}, u_{int}(\cdot)) &= \\
 f_{int}(\xi_{int}&(t; t_0, x_{int0}, u_{int}(\cdot)), u_{int}(t)),
\end{split}  \\
\xi_{int}(t_0; t_0, x_{int0}, u_{int}(\cdot)) &= x_{int0}.
\end{align}


Similarly, define the inverse trajectory of the internal system as follows. Let $\xi_{int}^{-1}$ be the trajectory which propagates states forward in time starting from some state $x^{-1}_{int0}$ at $t_0$ up until time $t$, using the \textit{reverse-time dynamics} of the internal system:
\begin{align}
    \begin{split}
	\frac{d}{ds} \xi_{int}^{-1}(t; t_0, x^{-1}_{int0}, u_{int}(\cdot)) &= \\
    - f_{int}(& \xi_{int}^{-1}(t; t_0, x^{-1}_{int0}, u_{int}(\cdot)), u_{int}(t)), \end{split} \\
    \xi_{int}(t_0; t_0, x^{-1}_{int0}, u_{int}(\cdot)) &= x^{-1}_{int0}.
\end{align}
With a slight abuse of notation, $x^{-1}_{int0}$ is used to indicate the initial state of the reverse-time trajectory mapping $\xi^{-1}_{int}$, and is not related to the variable $x_{int0}$ in any direct way.

The same assumptions are made on the dynamics of the external system, such that state trajectories of the internal system are defined analogously of those of the internal system, and given by $\xi_{ext}(t;t_0,x_{ext0},u_{ext}(\cdot))$ and $\xi^{-1}_{ext}(t;t_0,x^{-1}_{ext0},u_{ext}(\cdot))$.

To formally define the impersistent observation model on the external system, we define $\Sigma$ to be a $\sigma$-algebra on $[0,\infty)$, and let $\mathbf{T} \in \Sigma$ be a set of potentially open and infinite intervals on $[0,\infty)$, \textbf{unknown to the internal system}, which denote periods of time in which the internal system's observations of the external system are lost. Define $y_{int}(t) \in \R^{n_{int}} \times \{\R^{n_{ext}} \cup \varnothing \}$ to be the observation of the internal system at time $t$, with $\varnothing$ representing the $n_{ext}$-dimensional empty value. The observation model is then given by
\begin{equation}
	y_{int}(t) = 
	\begin{cases} 
		 \begin{bmatrix} x_{int}^\intercal(t),   &x_{ext}^\intercal(t) \end{bmatrix}^\intercal   \ &t \notin \mathbf{T} \\
		 \begin{bmatrix} x_{int}^\intercal(t),  & \varnothing^\intercal \ \ \ \ \end{bmatrix}^\intercal   &t \in \mathbf{T}
	\end{cases}
\end{equation}
Conversely, the observation model for the external system is simply given by 
\begin{equation}
	y_{ext}(t) = \begin{bmatrix} x_{int}^\intercal(t),  & x_{ext}^\intercal(t) \end{bmatrix}^\intercal, \  \forall t \in [t_0, t_f].
\end{equation}
These definitions formalize the notion that for times $t\in\mathbf{T}$, the internal system can only observe its own state, and not the external state. 

Finally, we define a quantitative notion of safety of the internal system, which is given by the function $d: \R^{n_{int}} \times \R^{n_{ext}} \ra \R$. The joint internal and external state of the system at time $t$ is defined to be safe (for the internal system) if 
\begin{equation} d(x_{int}(t), x_{ext}(t)) \geq 0. 
\end{equation}
Therefore, we can think of the function $d$ as some signed distance to the set of unsafe system configurations.

The objective of the internal system is to  remain safe for all $t \in [t_0, t_f]$ and all possible $\mathbf{T}$. We can, without loss of generality, assume that $\mathbf{T} = (t_0, t_f]$, since any control policy that is safe without observations of the external system will also be safe with those observations, whereas the converse is in general not true. Under this assumption, $\mathbf{T}$ does not include the point $t_0$, since the initial state of each systems is assumed to be known to both systems at $t_0$. 

The assumption that $\mathbf{T} = (t_0, t_f]$ may appear overly-conservative, since in practice the internal system is at least intended to receive observations of the external system. However, if we wish the internal system to truly be robust to the possibility of lack of observations at any and all times, we must assume such is true in the analysis and design of our system. Therefore this choice does not introduce additional conservatism into the formulation. During the actual execution of the system, any received observations should and will be used to allow proficient behavior of the system, which we discuss in Section \ref{sec:system_execution}.

Given the assumptions on $\mathbf{T}$, the function indicating whether the initial internal and external system configurations will be safe for all time $[t_0,t_f]$ can be stated as the sign of the following value function: 
\begin{equation} \label{eq:opt}
\begin{aligned}
	V_{int}(x_{int0}, x_{ext0}, t_0, t_f) &= \\
	\sup_{u_{int}(\cdot) \in \mathbb{U}_{int}} \  & \inf_{u_{ext}(\cdot) \in  \mathbb{U}_{ext}} \  \min_{t \in [t_0,t_f]}  d(x_{int}(t), x_{ext}(t)) \\
	\text{s.t.     }  &x_{int}(t) = \xi_{int}(t; t_0, x_{int0}, u_{int}(\cdot)) \\
	 &x_{ext}(t) = \xi_{ext}(t; t_0, x_{ext0}, u_{ext}(\cdot))
\end{aligned}
\end{equation}
This function computes the least signed-distance encountered to unsafe configurations along the worst-case trajectory of the unobserved external system, and best-effort trajectory of the internal system. A non-negative value function indicates that the internal system can remain safe, despite failure to observe the external system and its adversarial attempts to render the internal system unsafe. A negative value, meanwhile, indicates that there exists control actions by the external system such that no control signal of the internal system can maintain safety for all time, at least without observations of the external system. We wish to compute the super-zero-level set of (\ref{eq:opt}), or equivalently compute the exact set of all safe initial configurations of the system. 

Note that in (\ref{eq:opt}), the optimization for both systems' control signals are \textit{open-loop}, i.e. neither is an explicit function of the state of either system (besides the initial states, $x_{int0}$ and $x_{ext0}$). This is because the internal system is assumed to be unable to observe the state of the external system for all times $t>t_0$, and therefore unable to react to its position during times of unobservability. Since the optimization over the external system control is chosen inside the optimization over internal control signals, this information structure does allow for the external control signal to be chosen knowing the position of the internal system. In this situation, there is no difference between an open-loop and closed-loop control signal for the external system. We note that in game theoretic terms, this game can be described as an open-loop Stackelberg game, in which the internal system is the lead player.

Looking at the structure of (\ref{eq:opt}), it is not obvious how to evaluate the value function for arbitrary inputs, let alone compute the super-zero-level set. This is mainly as it requires solving difficult nested optimizations for each evaluation. This difficulty exists even in offline settings in which real-time evaluation is not necessary. Unlike the case with \textit{closed-loop} differential games, the game value function in this form cannot even be pre-computed using offline dynamic programming, due to the open-loop nature of the information structure. This is because, while the state of the external system is known to the internal system at time $t_0$, for all $t > t_0$, the state of external system is unknown. 

That being said, we will see that because the internal and external system have independent dynamics, (\ref{eq:opt}) can be reformulated in a way that does allow for a dynamic programming solution, allowing its offline computation to leverage standard methods. This alternative formulation will rely on first computing the forward reachable set (FRS) of the external system (set of all states the external system can reach from its initial position), and then computing the backwards avoid tube for the internal system (set of all initial states of the internal system that can avoid the FRS of the external system). Using this method to compute the game value (\ref{eq:opt}) offline, the value and a resultant control policy can be used in online, real-time execution of the system. 

We first formally define the forward reach set of the external system, denoted $\mathbf{R}_{ext}(t, x_{ext0}; t_0)$ for time $t \in [t_0,t_f]$. This is the set of all states $x$ at time $t$ that the external system can reach with admissible control signals, starting from $x_{ext0}$ at time $t_0$. Equivalently, this is the set of all states $x$ at time $t_0$ from which the inverse trajectory of the external system can reach $x_{ext0}$ at time $t_f$.
\begin{equation}
\begin{split}
	\mathbf{R}_{ext}(t, x_{ext0}; t_0) \ \ \ \ \ &  \\
	:= \{ x \in \R^{n_{ext}} & : \exists \  u_{ext}(\cdot) \in \mathbb{U}_{ext} \\
	  \text{  s.t. } & \xi_{ext}(t; t_0, x_{ext0}, u_{ext}(\cdot)) = x\} \\
	\ \ \ \ \ &  \\
	= \{ x \in \R^{n_{ext}} & : \exists \  u_{ext}(\cdot) \in \mathbb{U}_{ext} \\
	  \text{  s.t. } & \xi_{ext}^{-1}(t; t_0, x, u_{ext}(\cdot)) = x_{ext0}\}
\end{split}
\end{equation}
Using the second definition, $\mathbf{R}_{ext}(t,x_{ext0}; t_0)$ can easily be expressed as the sub-zero-level set of the following value function:
\begin{equation} \label{eq:value_r}
\begin{split}
	V_{\mathbf{R}_{ext}}(x,t,x_{ext0};t_0) &:= \inf_{u_{ext}(\cdot) \in \mathbb{U}_{ext}} \|z(t) - x_{ext0}\| \\
	\text{s.t.    } & z(t) = \xi_{ext}^{-1}(t; t_0, x, u_{ext}(\cdot))
\end{split}
\end{equation}


Following immediately from the definition,
\begin{equation} \label{eq:value_r_init}
V_{\mathbf{R}_{ext}}(x, t_0, x_{ext0};t_0) = \|x - x_{ext0}\|.
\end{equation}
 

The following theorem shows that $V_{\mathbf{R}_{ext}}(x,t,x_{ext0};t_0)$ , and consequently $\mathbf{R}_{ext}(t, x_{ext0}; t_0)$, can be numerically solved for all $t \in [t_0, t_f]$ using a variational Hamiltonian-Jacobi partial differential equation. 

\begin{theorem} \label{thm:frs_external}
Suppose $f_{ext}: \R^{n_{ext}} \times \mathcal{U}_{ext} \ra \R^{n_{ext}}$, is uniformly continuous, bounded, and Lipschitz continuous in its first argument. Then the value function $V_{\mathbf{R}_{ext}}(x,t,x_{ext0};t_0)$ defined in (\ref{eq:value_r}) is the unique viscosity solution of the variational Hamilton-Jacobi PDE:
\begin{equation} \label{pde1:eq}
    \begin{split}
        \frac{\partial V_{\mathbf{R}_{ext}}}{\partial t} + &H = 0, \\
        &\forall t \in [t_0, t_f], x \in \R^{n_{ext}},
    \end{split}
\end{equation}
with initial condition:
\begin{equation} \label{pde1:ic}
    V_{\mathbf{R}_{ext}}(x, t_0, x_{ext0};t_0) = \Vert x - x_{ext0} \Vert,
\end{equation}
where the Hamiltonian, $H$, is defined by:
\begin{equation} \label{pde1:ham}
    H = \min_{u \in \mathcal{U}_{ext}} \frac{\partial \mathbf{V}_{R_{ext}}}{\partial x}(x, t, x_{ext0}; t_0) \cdot \left(- f_{ext}(x, u)\right).
\end{equation}
\end{theorem}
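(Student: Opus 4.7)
My plan is to apply the standard Crandall--Lions viscosity-solution recipe to $V_{\mathbf{R}_{ext}}$, which is defined through forward-in-time propagation of the reverse-time dynamics $\dot z = -f_{ext}(z,u)$ together with a terminal-distance cost. The argument naturally splits into (i) regularity of $V_{\mathbf{R}_{ext}}$, (ii) a dynamic programming principle (DPP), (iii) verification of the viscosity sub- and super-solution inequalities, and (iv) uniqueness via a comparison principle.

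First I would prove that $V_{\mathbf{R}_{ext}}(x,t,x_{ext0};t_0)$ is Lipschitz continuous in $x$ uniformly on bounded time intervals and Lipschitz in $t$. The spatial bound follows from a Gr\"onwall estimate: since $f_{ext}$ is Lipschitz in $x$ uniformly in $u \in \mathcal{U}_{ext}$, two reverse-time trajectories sharing the same control signal satisfy $\Vert z(t) - z'(t)\Vert \leq e^{L(t-t_0)} \Vert x - x'\Vert$. Taking the infimum over $u$ of the $1$-Lipschitz map $y \mapsto \Vert y - x_{ext0}\Vert$ then transfers this regularity to $V_{\mathbf{R}_{ext}}$, while boundedness of $f_{ext}$ delivers the $t$-Lipschitz bound. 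These estimates are what permit the viscosity framework.

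Next I would establish the DPP: for any $t_0 \leq t' \leq t$ and $x \in \R^{n_{ext}}$,
\[
V_{\mathbf{R}_{ext}}(x, t, x_{ext0}; t_0) \,=\, \inf_{u \in \mathbb{U}_{ext}} V_{\mathbf{R}_{ext}}\bigl(\xi_{ext}^{-1}(t'; t_0, x, u),\, t,\, x_{ext0};\, t'\bigr),
\]
proved by concatenating admissible controls at $t'$ and invoking the time-invariance (semigroup property) of the reverse-time flow. Specializing to $t' = t_0 + \delta$ for small $\delta > 0$, Taylor-expanding $\xi_{ext}^{-1}(t_0 + \delta; t_0, x, u) = x - \delta f_{ext}(x, u(t_0)) + o(\delta)$, and collecting the $O(\delta)$ terms yields, formally, the Hamilton--Jacobi equation (\ref{pde1:eq}) with Hamiltonian (\ref{pde1:ham}); the initial condition (\ref{pde1:ic}) is immediate from $z(t_0) = x$. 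I would then upgrade this formal computation to the viscosity sense in the usual way: for any smooth test function $\phi$ such that $V_{\mathbf{R}_{ext}} - \phi$ attains a local extremum at $(x_*, t_*)$, substitute $\phi$ into the DPP, use a constant control (for the sub-solution inequality) or a minimizing control sequence (for the super-solution inequality), send $\delta \downarrow 0$, and invoke continuity of $f_{ext}$ in $u$ together with compactness of $\mathcal{U}_{ext}$ to extract the pointwise inequality on $\partial_t \phi + H(x_*, \nabla_x \phi)$. Uniqueness then follows from the classical comparison principle for continuous viscosity solutions of first-order HJ PDEs whose Hamiltonian is Lipschitz in the gradient and continuous in $x$, hypotheses guaranteed here by the compactness of $\mathcal{U}_{ext}$ and the regularity of $f_{ext}$, combined with $1$-Lipschitz initial data.

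The step I expect to be the main obstacle is the careful bookkeeping of sign conventions needed to translate the reverse-time flow into exactly the Hamiltonian appearing in (\ref{pde1:ham}): the minus sign in $\dot z = -f_{ext}(z,u)$ must be tracked through the DPP's Taylor expansion so that the $(-f_{ext}(x,u))$ factor inside the $\min_u$ comes out with the correct sign relative to $\partial_t V_{\mathbf{R}_{ext}}$, and equally the direction in which $t'$ is taken in the DPP (infinitesimally above $t_0$ versus infinitesimally below $t$) must match the PDE's time orientation. Once that is nailed down together with the DPP, the viscosity verification and comparison-principle uniqueness reduce to standard results from first-order Hamilton--Jacobi theory.
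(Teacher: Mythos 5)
Your skeleton (Gr\"onwall regularity, dynamic programming, test-function verification, comparison-principle uniqueness) is the standard route, and it is effectively all the paper itself relies on: the paper offers no argument beyond a citation to the HJ-reachability survey. So the issue is not strategy but the one step you yourself flagged as the main risk and then asserted works out --- it does not. Push your own expansion through: by time-invariance $V_{\mathbf{R}_{ext}}(y,t,x_{ext0};t_0+\delta)=V_{\mathbf{R}_{ext}}(y,t-\delta,x_{ext0};t_0)$ and $\xi_{ext}^{-1}(t_0+\delta;t_0,x,u)=x-\delta f_{ext}(x,u(t_0))+o(\delta)$, so your DPP gives
\begin{equation*}
0=\inf_{u\in\mathcal{U}_{ext}}\bigl\{-\delta\,\nabla_x V_{\mathbf{R}_{ext}}\cdot f_{ext}(x,u)-\delta\,\partial_t V_{\mathbf{R}_{ext}}\bigr\}+o(\delta),
\end{equation*}
i.e.\ $\partial_t V_{\mathbf{R}_{ext}}=\min_{u\in\mathcal{U}_{ext}}\nabla_x V_{\mathbf{R}_{ext}}\cdot(-f_{ext}(x,u))=H$, which is $\partial_t V_{\mathbf{R}_{ext}}-H=0$, equivalently $\partial_t V_{\mathbf{R}_{ext}}+\max_{u\in\mathcal{U}_{ext}}\nabla_x V_{\mathbf{R}_{ext}}\cdot f_{ext}(x,u)=0$, and not (\ref{pde1:eq}) with the Hamiltonian (\ref{pde1:ham}).

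The discrepancy is real, not a bookkeeping slip on your side of the Taylor expansion: take $n_{ext}=1$, $f_{ext}(x,u)=u$, $\mathcal{U}_{ext}=[-1,1]$, $x_{ext0}=0$, $t_0=0$. Then (\ref{eq:value_r}) gives $V_{\mathbf{R}_{ext}}(x,t)=\max(|x|-t,\,0)$, and wherever $V_{\mathbf{R}_{ext}}>0$ one has $\partial_t V_{\mathbf{R}_{ext}}=-1$, $|\partial_x V_{\mathbf{R}_{ext}}|=1$, hence $\partial_t V_{\mathbf{R}_{ext}}+H=-2\neq 0$, while $\partial_t V_{\mathbf{R}_{ext}}+\max_u \partial_x V_{\mathbf{R}_{ext}}\cdot u=0$; since a continuous viscosity solution must satisfy the equation classically at points of differentiability, the value function of (\ref{eq:value_r}) is not a viscosity solution of (\ref{pde1:eq})--(\ref{pde1:ham}) as literally stated. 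The existential quantifier over controls in the forward-reachable-set definition forces the \emph{maximizing} Hamiltonian (the usual FRS convention in the cited survey); so either you prove the corrected equation $\partial_t V_{\mathbf{R}_{ext}}-H=0$ (equivalently replace $\min$ by $\max$ over $f_{ext}$), acknowledging you are not proving the literal statement, or the attempt cannot close, because no amount of viscosity machinery rescues an equation the value function does not satisfy. Two smaller points if you do execute the corrected proof: the comparison principle must be the version accommodating unbounded, linearly growing (Lipschitz) data, since $\Vert x-x_{ext0}\Vert$ is unbounded; and for an $\inf$-type value function the arbitrary-constant-control branch and the near-optimal-control branch of the DPP give the two viscosity inequalities in a specific pairing that you should state rather than leave as ``the usual way.''
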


\begin{proof}
(see \cite{bansal2017hamilton}, Section II).
\end{proof}

Therefore, this PDE can be solved numerically to obtain the set $\mathbf{R}_{ext}(t,x_{ext0};t_0)$, which can then be used to define a new, \textit{pessimistic} safety function, which eliminates the need to know the exact state $x_{ext}(t)$.
\begin{equation}
\hat{d}(x_{int}, t, x_{ext0}; t_0) := \min_{x_{ext} \in \mathbf{R}_{ext}(t, x_{ext0};t_0)} d(x_{int}, x_{ext}).
\end{equation}
Observe that the safety of the internal system at time $t$ can equivalently be expressed through the condition $\hat{d}(x_{int}(t),t, x_{ext0}; t_0) \geq 0$. We note that this function in general may also be difficult to evaluate, however, as it too requires solving a potentially difficult optimization problem for each evaluation. By defining the set $\mathbf{D}$ to be the sub-zero-level set of the function $\hat{d}$, we can construct an alternative pessimistic safety function that will equivalently represent our safety criterion. First define the set 
 \begin{equation}
 \begin{split}
 \tilde{\mathbf{D}}(t, x_{ext0}; t_0) \subset \R^{n_{int}} &:= \\
 proj_{int}( \mathbf{D} & \cap (\mathbf{R}_{ext}(t,x_{ext0};t_0) \times \R^{n_{int}})).
 \end{split}
 \end{equation}
 Here $proj_{int}$ is the projection operator, projecting arguments into the state space of the internal system. In words, for a given time $t$, and initial state $x_{ext0}$ at initial time $t_0$, the set $\tilde{\mathbf{D}}(t, x_{ext0}; t_0)$ is the set of all states $x_{int}$ such that there exists a $x_{ext}$ reachable by the external system at time $t$, such that the pair $(x_{int}, x_{ext})$ is unsafe. Now we define the alternate pessimistic safety function as the signed distance to this set:
\begin{equation}
	\tilde{d}(x_{int},x_{ext0}, t; t_0) := \begin{cases}
		\min_{x \in \tilde{\mathbf{D}}(t, x_{ext0};t_0)} \| x - x_{int} \|   &x_{int} \notin \tilde{\mathbf{D}} \\
		 -\min_{x \notin \tilde{\mathbf{D}}(t, x_{ext0};t_0)} \| x - x_{int} \|   &x_{int} \in \tilde{\mathbf{D}}
	\end{cases}
\end{equation}
Using this alternate pessimistic safety function, we define a new value function,
\begin{equation} \label{eq:opt2}
\begin{aligned}
    \tilde{V}_{int}(x_{int0}, x_{ext0}, t, t_0, t_f) = & \\
    \sup_{u_{int}(\cdot) \in \mathbb{U}_{int}} & \   \min_{s \in [t,t_f]} \  \tilde{d}(x_{int}(s), s,x_{ext0}; t_0) \\
	\text{s.t.    }  x_{int}(s) &= \xi_{int}(s; t, x_{int0}, u_{int}(\cdot)) 
\end{aligned}
\end{equation}
Note that 
\begin{equation} \label{eq:value_a_int}
\tilde{V}_{int}(x, x_{ext0}, t_f, t_0, t_f) = \tilde{d}(x, t_f, x_{ext0}; t_0). 
\end{equation}
Here, an extra argument on time is present in contrast to the function $V_{int}$ in (\ref{eq:opt}). When evaluated at $t=t_0$, however, the value function $\tilde{V}_{int}$  has an equivalent zero-level set as $V_{int}$. Since we are ultimately interested in the super-zero-level set of $V_{int}$, we can equivalently work with $\tilde{V}_{int}$, which we will show can also be computed with known methods. Given the definition of this new value function, it becomes evident that the super-zero-level set, when evaluated at a particular $x_{ext0}$, is the set of states $x_{int0}$ for which the internal system can avoid the set $\tilde{\mathbf{D}}(t, x_{ext0}; t_0)$, for all $t \in [t_0, t_f]$. This is precisely the definition of backwards avoid-tube of the set $\tilde{\mathbf{D}}(t, x_{ext0}; t_0)$, which we denote by $\mathbf{A}_{int}$:
\begin{equation}
\begin{split}
	\mathbf{A}_{int}(&t, \tilde{D}(\cdot,\cdot;\cdot), x_{ext0}; t_0, t_f) := \\
	\{ & x \in \R^{n_{int}} :  \ \exists u_{int}(\cdot)  \in \mathbb{U}_{ext}, \\
	 & \text{ s.t. }  \xi_{int}(s; t,  x,  u_{int}(\cdot)) \notin \tilde{\mathbf{D}}(s, x_{ext0}; t_0), \ \forall s \in [t, t_f]\}.
\end{split}
\end{equation}
In this definition, the argument $t$ is the time corresponding to the beginning of the internal system trajectory, where as the set $\tilde{\mathbf{D}}$ is always computed with respect to initial time $t_0$. 

Similar to $V_{\mathbf{R}_{ext}}(x,t,x_{ext0};t_0)$, the value function  $\tilde{V}_{int}(x,x_{ext},t,t_0,t_f)$ (for a particular $x_{ext}$, $t_0$, and $t_f$), and hence the avoid tube $\mathbf{A}_{int}(t, \tilde{\mathbf{D}}, x_{ext0}; t_0, t_f)$,  can be solved using a HJ PDE.
\begin{theorem} \label{thm:brt_internal}
Suppose $f_{int}: \R^{n_{int}} \times \mathcal{U}_{int} \ra \R^{n_{int}}$, is uniformly continuous, bounded, and Lipschitz continuous in its first argument. Then the value function $\tilde{V}_{int}(x,x_{ext},t,t_0,t_f)$ is the unique viscosity solution of the variational Hamilton-Jacobi PDE:
\begin{equation} \label{pde2:eq}
    \begin{split}
    0 = \max & \left\{ \frac{\partial \tilde{V}_{int}}{\partial t} + H, \tilde{d}(x, t, x_{ext0}; t_0) - \tilde{V}_{int}(x,x_{ext},t,t_0,t_f) \right\} \\
         & \forall t \in [t_0, t_f], x \in \R^{n_{int}}, 
    \end{split}
\end{equation}
with terminal condition:
\begin{equation} \label{pde2:ic}
    \tilde{V}_{int}(x, x_{ext0}, t_f, t_0, t_f) = \tilde{d}(x, t_f, x_{ext0}; t_0),
\end{equation}
where the Hamiltonian is defined by:
\begin{equation} \label{pde2:ham}
    H = \min_{u \in \mathcal{U}_{int}} \frac{\partial \tilde{V}_{int}}{\partial x}(x,x_{ext},t,t_0,t_f) \cdot f_{int}(x, u).
\end{equation}
\end{theorem}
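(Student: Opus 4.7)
The plan is to follow the standard viscosity-solution template for variational Hamilton-Jacobi equations with a running-minimum payoff, as developed in \cite{bansal2017hamilton}. The key observation is that, once the external player has been absorbed into the time-varying unsafe set $\tilde{\mathbf{D}}(s, x_{ext0}; t_0)$ via the pessimistic safety function $\tilde{d}$, the value (\ref{eq:opt2}) is precisely a minimum-over-time backward reachable tube value for a single-player controlled system with dynamics $f_{int}$, time-varying obstacle $\tilde{d}$, and terminal condition equal to that obstacle at $t_f$. The PDE (\ref{pde2:eq}) is then the canonical variational HJ equation for such tubes, so the proof amounts to specializing the existing theory to the present setting while verifying that the regularity hypotheses are met.

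First I would establish a dynamic programming principle of the form
\begin{equation}
\tilde{V}_{int}(x, x_{ext0}, t, t_0, t_f) = \sup_{u_{int}(\cdot) \in \mathbb{U}_{int}} \min\Bigl\{ \min_{s \in [t, t+h]} \tilde{d}(\xi_{int}(s), s, x_{ext0}; t_0),\; \tilde{V}_{int}(\xi_{int}(t+h), x_{ext0}, t+h, t_0, t_f) \Bigr\}
\end{equation}
for sufficiently small $h>0$, where $\xi_{int}(\cdot) = \xi_{int}(\cdot; t, x, u_{int}(\cdot))$. This follows by splitting the interval $[t,t_f]$ into $[t,t+h] \cup [t+h,t_f]$ and using the semigroup property of the internal dynamics. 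Next I would plug in a smooth test function $\varphi$, evaluate the DPP at a local extremum of $\tilde{V}_{int}-\varphi$, and pass to the limit $h \to 0^+$. Two cases arise: when $\tilde{d}(x,t,x_{ext0};t_0) < \tilde{V}_{int}$, the minimum is attained by the trajectory term and one recovers the standard sub- and super-differential inequalities $\partial_t \varphi + H_\varphi = 0$; when $\tilde{d}(x,t,x_{ext0};t_0) = \tilde{V}_{int}$, the obstacle term is active and one obtains $\tilde{d} - \tilde{V}_{int} = 0$. Combined, these give the $\max\{\cdot,\cdot\} = 0$ form in (\ref{pde2:eq}). The Lipschitz and boundedness hypotheses on $f_{int}$ control the Taylor remainder along $\xi_{int}$ uniformly, legitimizing the limit in the viscosity sense.

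Uniqueness in the class of bounded uniformly continuous functions then follows from a standard comparison argument for obstacle-type HJ equations, as in Crandall--Lions, since the Hamiltonian $H(x, p) = \min_{u \in \mathcal{U}_{int}} p \cdot f_{int}(x, u)$ is Lipschitz in $p$ and uniformly continuous in $x$ under the assumptions on $f_{int}$, and the terminal datum $\tilde{d}(\cdot, t_f, x_{ext0}; t_0)$ is bounded and uniformly continuous in $x$. The hard part will be verifying this last regularity claim for $\tilde{d}$ jointly in $(x, s)$, because $\tilde{d}$ is defined through the time-varying reachable set $\mathbf{R}_{ext}(s, x_{ext0}; t_0)$ produced by Theorem \ref{thm:frs_external}. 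I would argue that since $f_{ext}$ is bounded and Lipschitz, $\mathbf{R}_{ext}(s, x_{ext0}; t_0)$ varies continuously in $s$ under the Hausdorff metric with a Lipschitz modulus controlled by the bound on $f_{ext}$; this transfers to joint Lipschitz continuity of $\tilde{d}$ in $(x, s)$ via the definition as a signed distance to the projected set $\tilde{\mathbf{D}}$. Once this regularity is in hand, the theorem reduces to a direct application of the reachability results summarized in \cite{bansal2017hamilton}.
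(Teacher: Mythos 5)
Your overall route is fine in spirit: the paper's own ``proof'' of Theorem~\ref{thm:brt_internal} is a one-line citation to \cite{fisac2015reach} (Theorem~1), and the dynamic-programming-principle / viscosity-test-function / comparison-principle outline you give is essentially how that cited result is established, specialized to a single player with a time-varying obstacle. Your additional step---verifying that $\tilde d$ is regular enough to serve as running cost and terminal datum ($1$-Lipschitz in $x$ because it is a signed distance, continuous in $s$ because boundedness of $f_{ext}$ gives Hausdorff-continuity of $\mathbf{R}_{ext}(s,x_{ext0};t_0)$ and hence of $\tilde{\mathbf{D}}$)---is a genuine and needed addition that the paper leaves entirely implicit, so on that front you go beyond the paper rather than diverge from it.

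There is, however, a concrete gap in the sign and case bookkeeping, and it is exactly where your verification step must connect to your uniqueness step. Since $s=t$ is included in the running minimum in (\ref{eq:opt2}), one always has $\tilde V_{int}(x,x_{ext0},t,t_0,t_f)\le \tilde d(x,t,x_{ext0};t_0)$, so your case ``$\tilde d<\tilde V_{int}$'' is vacuous; the correct dichotomy is $\tilde V_{int}<\tilde d$ (obstacle inactive, PDE branch) versus $\tilde V_{int}=\tilde d$ (obstacle active). More importantly, carrying out the $h\to 0^+$ limit of your DPP with the $\sup$ over $u_{int}$ produces the Hamiltonian $\max_{u\in\mathcal U_{int}}\nabla\tilde V_{int}\cdot f_{int}(x,u)$ together with the one-sided inequalities $\partial_t\tilde V_{int}+\max_u\nabla\tilde V_{int}\cdot f_{int}\ge 0$ and $\tilde d-\tilde V_{int}\ge 0$ with complementarity, i.e.\ a variational inequality of the form $\min\{\partial_t\tilde V_{int}+\max_u\nabla\tilde V_{int}\cdot f_{int},\ \tilde d-\tilde V_{int}\}=0$, which is the convention in \cite{fisac2015reach,bansal2017hamilton} for a maximizing (avoiding) controller. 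Your write-up instead feeds the statement's $H=\min_u\nabla\tilde V_{int}\cdot f_{int}$ and the outer $\max\{\cdot,\cdot\}$ from (\ref{pde2:eq}) into the comparison argument, so the equation you verify from the DPP and the equation for which you invoke uniqueness are not the same, and they do not coincide in general: with $f_{int}\equiv 0$ the value is $\tilde V_{int}(x,t)=\min_{s\in[t,t_f]}\tilde d(x,s)$, which satisfies the $\min$-form but satisfies the $\max$-form only if $\tilde V_{int}\equiv\tilde d$. To close the proof you must derive, and prove comparison for, one consistent form (the $\min\{\cdot\}$/$\max_u$ form matching the stated value function), and explicitly reconcile it with the $\max\{\cdot\}$/$\min_u$ convention appearing in (\ref{pde2:eq}), (\ref{pde2:ham}) and (\ref{eq:opt_ctrl}), rather than silently using both.
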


\begin{proof}
(see \cite{fisac2015reach}, Theorem 1).
\end{proof}

Therefore, by first solving the partial differential equation given by (\ref{pde1:eq}), (\ref{pde1:ic}), (\ref{pde1:ham}), and then the partial differential equation corresponding to (\ref{pde2:eq}), (\ref{pde2:ic}), (\ref{pde2:ham}), we can obtain a value function for which the super-zero-level set denotes the set of all safe states for the internal system, with the initial configuration of the external system parameterized by $x_{ext0}$ and despite un-observability of the external system. We denote this set the Eyes-Closed Safety Kernel of the internal system.
\begin{equation} \label{eq:eyes}
\mathbf{E}(t,t_0,t_f) = \{(x_{int}, x_{ext}) : \tilde{V}_{int}(x_{int},x_{ext},t,t_0,t_f) \geq 0\}
\end{equation}

Here we make an important note. In order to compute $\mathbf{E}(t, t_0,t_f)$, both of the PDEs must be solved for \textit{all} $x_{ext0}$ of interest. However, both PDEs can in general be solved for an entire set of $x_{ext0}$ simultaneously. This formulation would simply result in PDEs which have state dimension equal to $2n_{ext}$ and $(n_{int}+n_{ext})$ for the first and second PDEs, respectively. The equations given in (\ref{pde1:eq}), (\ref{pde1:ic}), (\ref{pde1:ham}), (\ref{pde2:eq}), (\ref{pde2:ic}), and (\ref{pde2:ham}) are still valid in this case, as the states $x_{ext0}$ do not evolve through time. We made explicit the computation of both PDEs as functions of a particular $x_{ext0}$ since in practice, symmetries in the problem can be exploited to avoid computing the full PDEs, which can be expensive or even intractable to compute. We demonstrate this technique in the example in Section \ref{sec:example}.

\subsection{System Execution} \label{sec:system_execution}
We stated that the result of the preceding calculations, the eyes-closed safety kernel, could be used for safe autonomous execution when the external system is unobserved as well as when it is observed. Recall that $\mathbf{T}$ (the intervals of time for which the external system observations are lost) is unknown to the internal system. However, we assume that the internal system is endowed with the ability to know at some $t\in [t_0,t_f]$ if $t\in\mathbf{T}$. In other words, the internal system is able to instantaneously detect that it has lost observations of the external system, through use of an anomaly detection module or by some other means. 

Note that since the dynamics of both the internal and external systems do not depend on time, the value functions computed are temporal-shift invariant. This means that $\mathbf{E}(x_{int}, x_{ext}, 0, 0, t_f)$ provides a safe operating region for which the internal system can operate in, guaranteeing the existence of a safe control signal for $t_f$ seconds. 

Over those $t_f$ seconds, the optimal control for the internal system at state $x_{int}$ at time $t$ with last-known observation of the external system at $x_{ext0}$ is given by 
\begin{equation} \label{eq:opt_ctrl}
\begin{split}
    u_{int}^*(x_{int}, t; x_{ext0}, t_f) &= \\
    \text{arg}\min_{u \in \mathcal{U}_{int}} &  \frac{d\tilde{V}_{int}}{dx_{int}}(x_{int},x_{ext0},t,0,t_f) \cdot f_{int}(x_{int},u)
    \end{split}
\end{equation}

If at any time before the $t_f$ seconds are expired, the internal system detects that $t \notin \mathbf{T}$ and observations of the external system are regained, then the control of the internal system can resume nominal control designed to satisfy some other objective, subject to remaining in the set $\mathbf{E}(0,0,t_f).$ We note that by  definition, there always exist control actions that keep the system within $\mathbf{E}$, regardless of the control of the external system. A practical way to ensure that the system never leaves the set is to monitor the value $\tilde{V}_{int}(x_{int},x_{ext},0,0,t_f)$ and switch to the control defined in (\ref{eq:opt_ctrl}) if the value drops within some tolerance above $0$.  

We further note that the guarantees provided on safety are only valid for $t_f$ seconds after observations are lost. However, by taking $t_f$ very large in the computation of (\ref{eq:eyes}), we can obtain guarantees for arbitrarily long intervals of time. In practice, often the value functions of interest are convergent anyways, in which case they converge to the infinite-horizon solution. When that is the case, the safety guarantees never expire. 

\subsection{Note on Sub-Optimal Approximations} \label{sec:approx}
There may exist systems for which the decomposition of the internal and external systems is not possible, yet consideration of the loss of observation of the external states is still desired. In these cases, the method presented here does not directly apply, although modifications can be made to compute sub-optimal yet conservative approximations of the sets defined. 

One such approximation is to fix a control policy for the internal system, so that the state trajectory of the internal system can be evaluated and incorporated into the objective of the external system. Then, computing the set of initial relative configurations of the internal and external system such that the external system can reach unsafe relative states, an \textit{over approximation of the unsafe relative initial configurations} is defined. By taking the complement of this set as the safe operating region for the internal system, the same guarantees on safety that are made in the exact method hold, although this approximation to the safe operating region may be significantly smaller or even empty compared to the true safe operating region. We note that almost all systems of practical consideration are in fact decomposable in the manner outlined, and therefore do not introduce any additional conservatism. 

\section{Example} \label{sec:example}



To demonstrate the method presented, we work through an example situation that is common to autonomous ground vehicles. The situation we wish to model is the interaction between a vehicle (internal system) and a dynamic obstacle (external system). As is typical for autonomous vehicles, the state of the external dynamic obstacle is estimated by the internal system using sensors. We assume that the internal system receives impersistent observations of the external system, which are exact within some bounded tolerance. This resembles a situation in which the internal system is using a data-driven algorithm to detect and track obstacles, but may encounter and detect out of distribution inputs, at which points in the perception module can not be trusted.

We assume for the sake of visualization that both the vehicle and dynamic obstacle are modeled as Dubins Cars. Therefore, their state vectors are both 3-dimensional, making the sets defined in Section \ref{sec:formulation} easy to visualize. The dimension of the two systems need not be the same, and indeed, higher dimensional systems can be considered to represent more realistic systems (barring some computational limits as discussed in Section \ref{sec:computation}). 


The internal system is considered safe if and only if the two vehicles do not collide. For the sake of simplicity and readability, we assume that each car is spherical with radii equal to $r$. Then the set of safe states is defined to be the super-zero-level set of the function 
\begin{equation}
    d(x_{int}, x_{ext}) := (x_1 - x_4)^2 + (x_2 - x_6)^2 - (2r)^2,
\end{equation}
where the dynamics of the internal system are given by
\begin{equation}
\begin{aligned}
    \dot{x}_1(t) &= u_1(t) \cdot cos(x_3(t)) \\
    \dot{x}_2(t) &= u_1(t) \cdot sin(x_3(t)) \\
    \dot{x}_3(t) &= u_2(t),
\end{aligned}
\end{equation}
and the dynamics of the dynamic obstacle (external system) are given by
\begin{equation}
\begin{aligned}
    \dot{x}_4(t) &= u_3(t) \cdot cos(x_6(t)) \\
    \dot{x}_5(t) &= u_3(t) \cdot sin(x_6(t)) \\
    \dot{x}_6(t) &= u_4(t).
\end{aligned}
\end{equation}

We assume that both agents are limited by the control bounds, 
\begin{equation}
\begin{aligned}
    u_1 \in [0, 4], \ \ u_2 \in &[-1.0, 1.0] \\
    u_3 \in [0, 3], \ \ u_4 \in &[-0.75, 0.75].
\end{aligned}
\end{equation}

\begin{figure}[tb] 
  \centering
  \smallskip
  \includegraphics[scale=0.13]{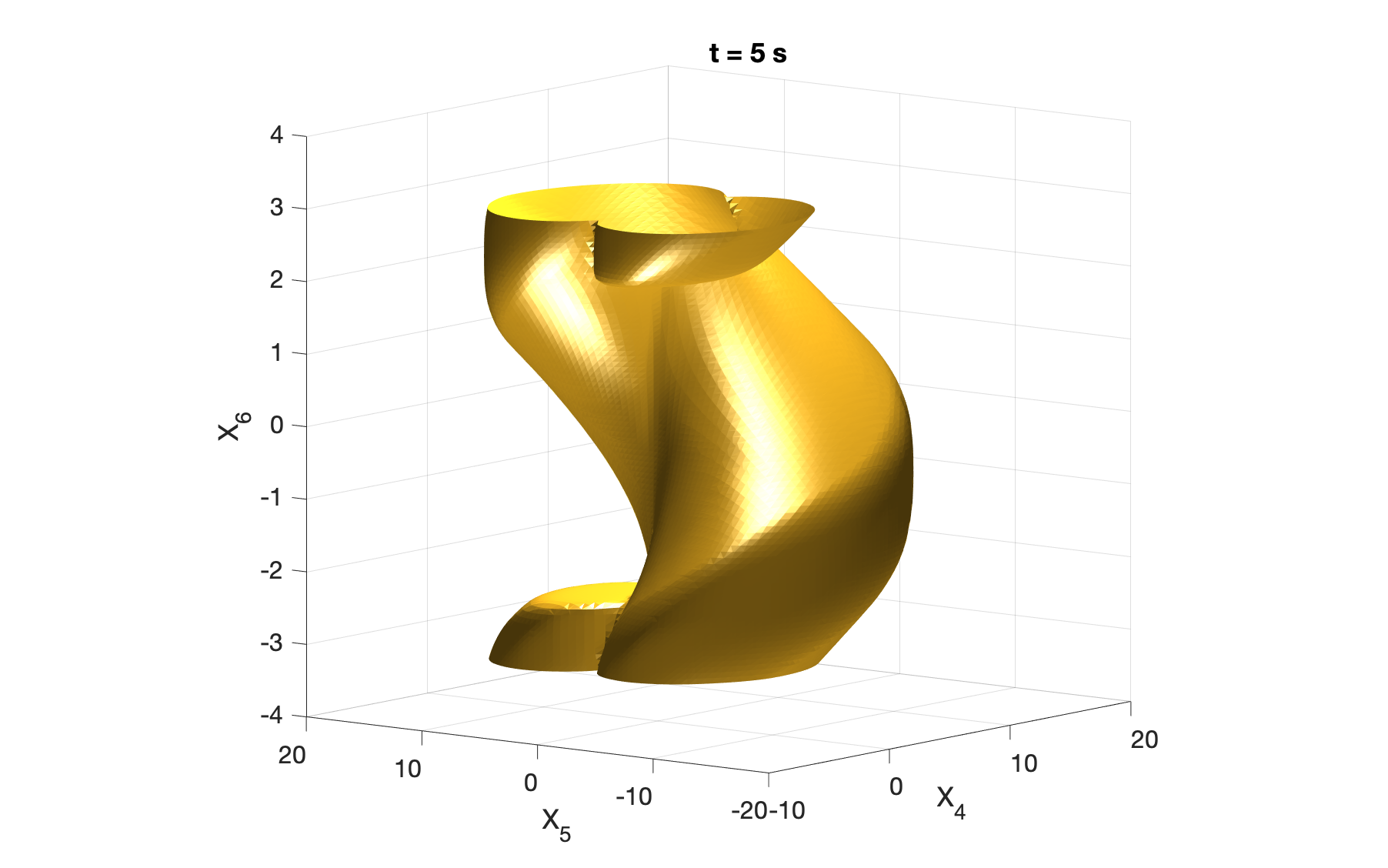} 
  \caption{Visualization of the set $\mathbf{R}_{ext}(5.0,\mathbf{0}; 0)$. The states inside the yellow volume are all the possible states the external system can reach in 5.0 seconds.}
  \label{figure:frs}
\end{figure}

\begin{figure}[tb] 
  \centering
  \smallskip
  \includegraphics[scale=0.13]{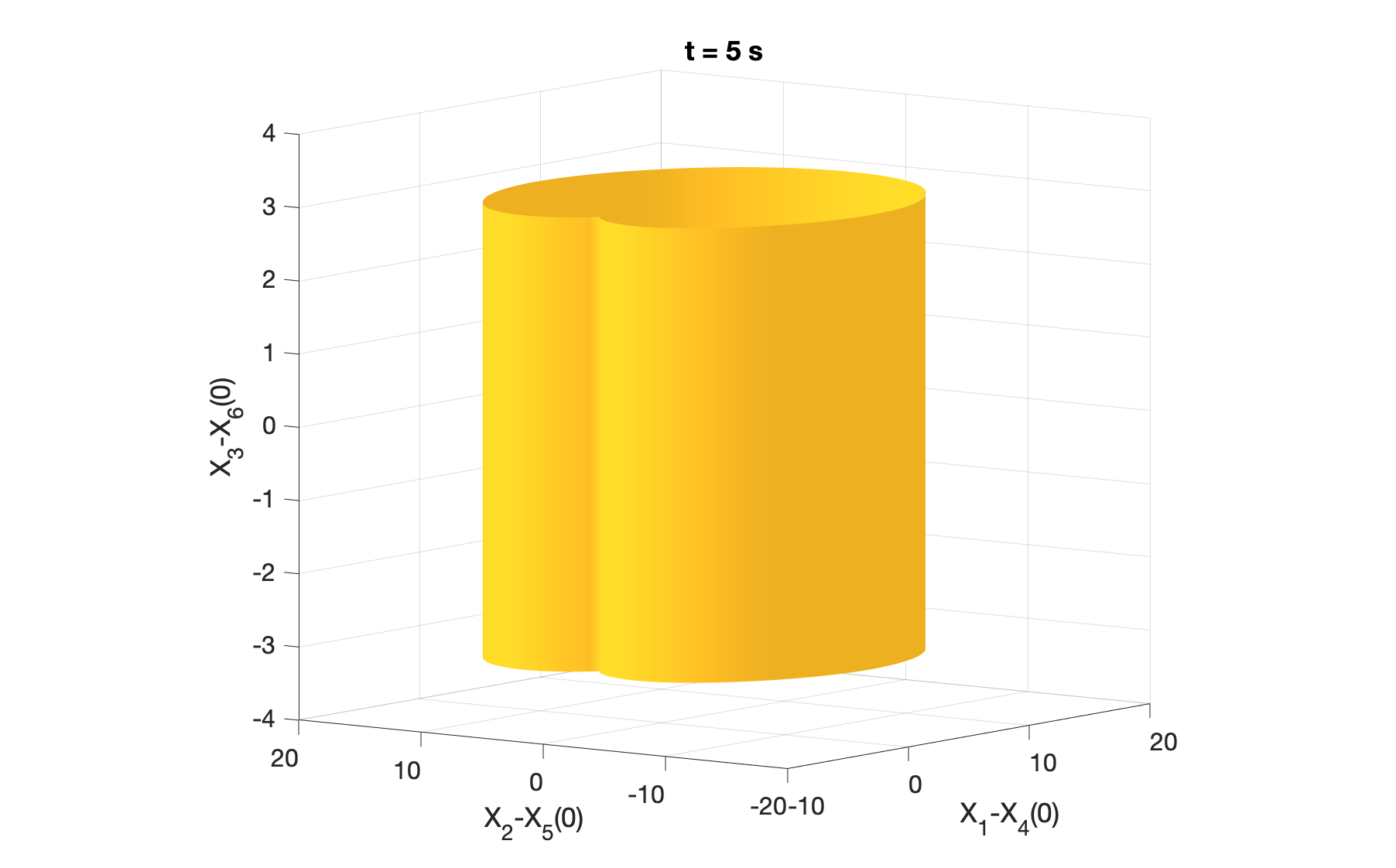} 
  \caption{Visualization of the set $\tilde{\mathbf{D}}(5.0, \mathbf{0}; 0)$. The states inside the yellow volume are all of the states of the internal system, relative to the last known position of the external system ($x_{ext0}$), such that collision at the current time is possible. Note the invariance of the set with respect to the relative heading dimension ($x_3-x_6(0)$). This is because collision should be avoided, irrespective of the heading of the two sub-systems.}
  \label{figure:gtilde}
\end{figure}

\begin{figure}[tb] 
  \centering
  \smallskip
  \includegraphics[scale=0.13]{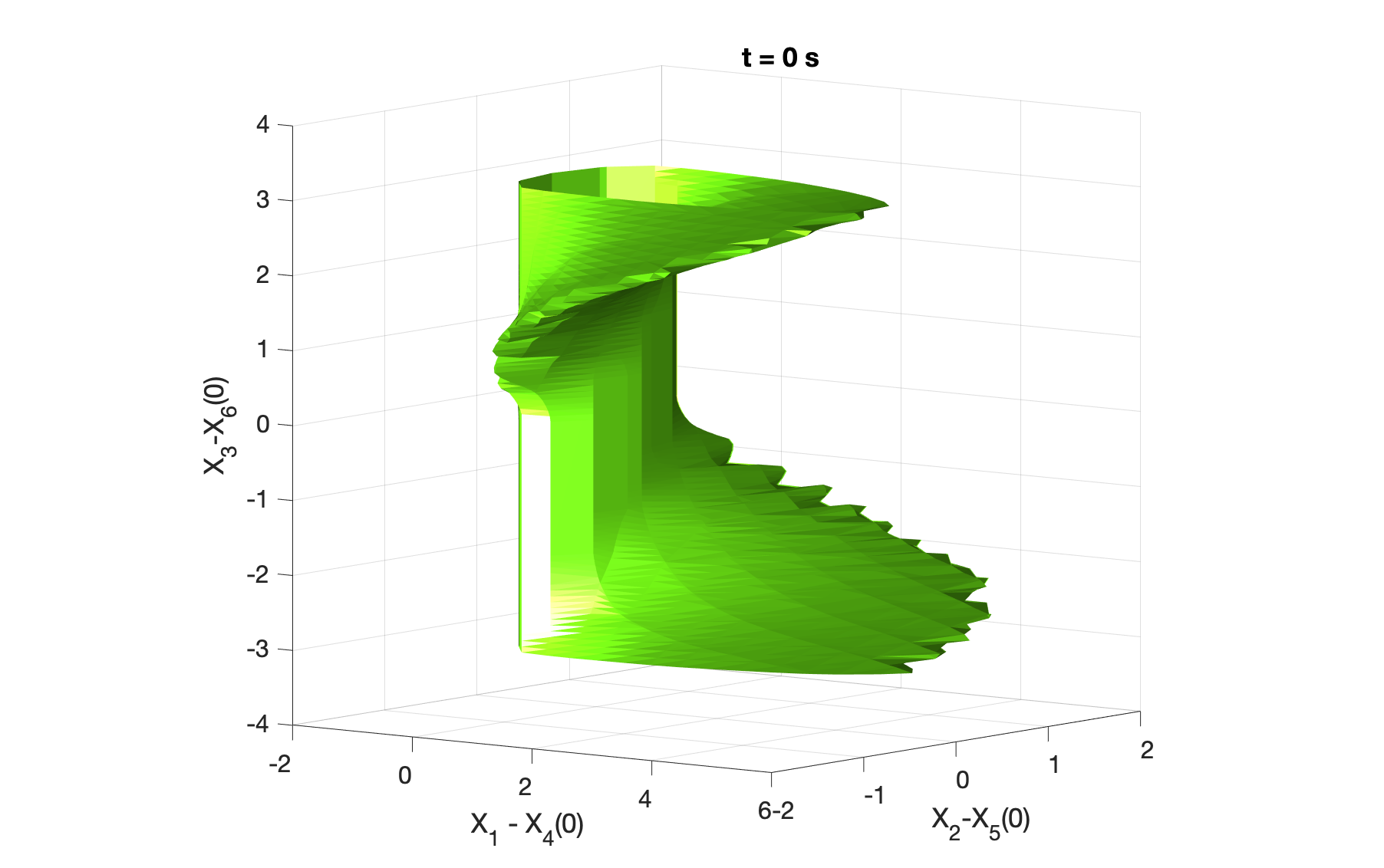} 
  \caption{Visualization of the set $\mathbf{A}_{int}(0, \tilde{\mathbf{D}},\mathbf{0};0,5.0)$. The states inside the green volume are the initial states of the internal system, relative to the initial state of the external system, that must be avoided in order to guarantee collision is avoided for $t_f$ seconds, even in the event that observations of the external system are lost at any time.}
  \label{figure:brt}
\end{figure}

\begin{figure}[tb] 
  \centering
  \smallskip
  \includegraphics[scale=0.22]{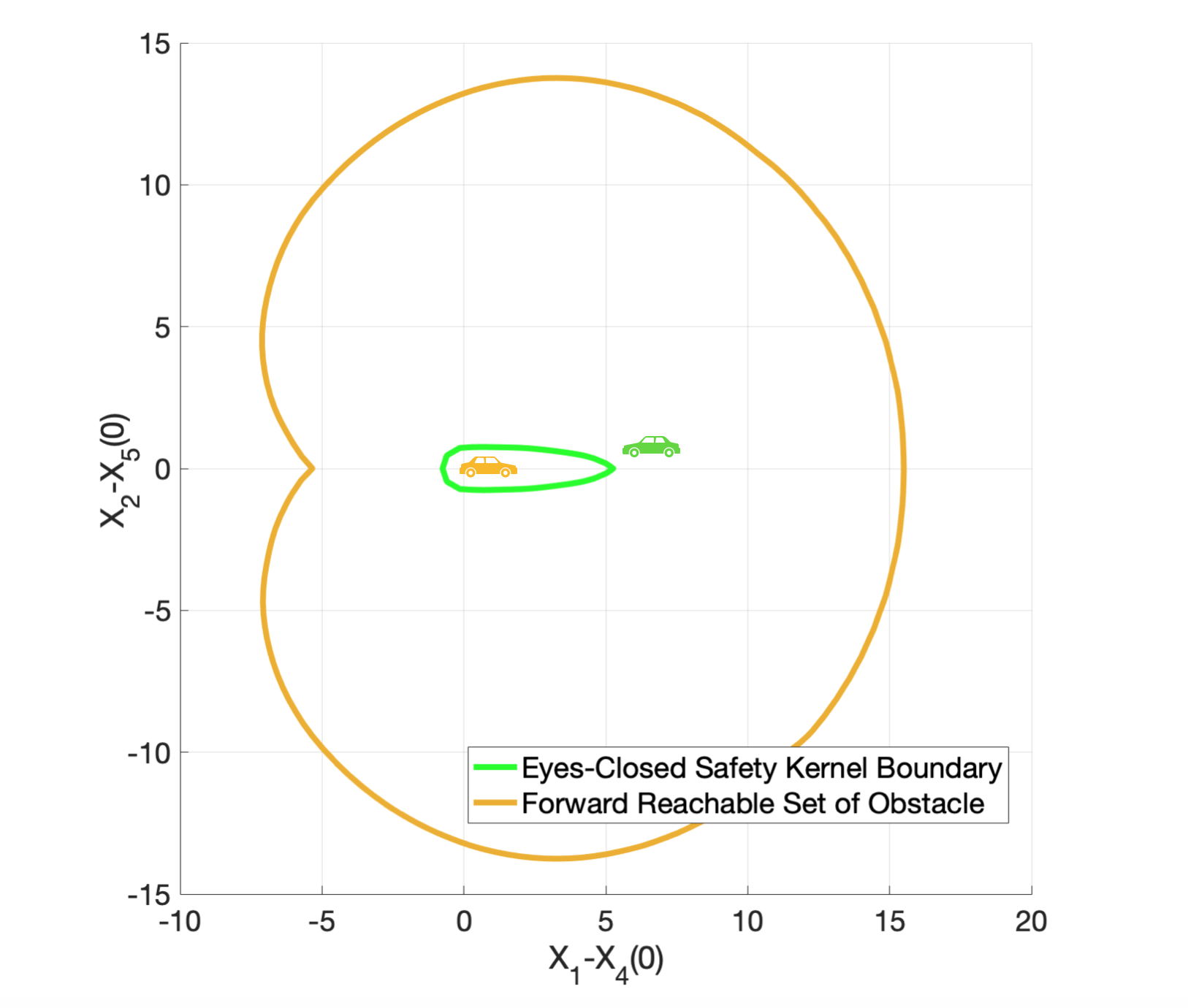} 
  \caption{Visualization of the Eyes-Closed Safety Kernel, $\mathbf{A}_{int}(0, \tilde{\mathbf{D}},\mathbf{0};0,5.0)$, and the Forward Reachable Set of the obstacle, $\tilde{\mathbf{D}}(5.0, \mathbf{0}; 0)$, for the slice $X_3-X_6(0) = \pi$. Here we see that despite the green vehicle (internal system) starting within the FRS of the orange vehicle (external system), the green vehicle lies within the Eyes-Closed Safety Kernel. This means that the dynamics of the combined system are such that there exists a maneuver by the green vehicle which can evade the orange vehicle, despite losing observability.}
  \label{figure:2dslice}
\end{figure}

We solve for the sets $\mathbf{R}_{ext}(t, x_{ext0}; 0)$, $\tilde{\mathbf{D}}(t,x_{ext0};0)$, and $\mathbf{A}_{int}(t, \tilde{\mathbf{D}},x_{ext0};t_0,t_f)$, defined in Section \ref{sec:formulation}. We assume without loss of generality that $t_0 = 0$, and solve for $t_f = 5.0$. We find that this time horizion is sufficient for the set $\mathbf{A}_{int}$ to converge to within an acceptably small tolerance. 

To solve for the set $\mathbf{R}_{ext}(t,x_{ext0}; 0)$ we use the Level Set Toolbox \cite{mitchell2005toolbox}, a Matlab \cite{MATLAB:2018} software package for solving partial differential equations using the level set method \cite{sethian1999level}. This set is visualized in Figure \ref{figure:frs} for $t=5.0$, and $x_{ext0} = \mathbf{0}$, where $\mathbf{0}$ is the zero vector. To compute the set $\tilde{\mathbf{D}}$, we first observe that the forward reachable set of the external system is invariant with respect to the initial condition $x_{ext0}$, up to a rigid transformation (rotation and translation). Therefore, we can solve for $\mathbf{R}_{ext}(t,\mathbf{0};0)$, and then compute the sets $\tilde{\mathbf{D}}(t,x_{ext0};0)$ and $\mathbf{A}_{int}(t, \tilde{\mathbf{D}},x_{ext0};0,5.0)$ in relative coordinates $x_1-x_4(0)$, $x_2-x_5(0)$, and $x_3-x_6(0)$. Here, we have avoided having to compute the sets for all possible values of $x_{ext0}$, as was discussed in Section \ref{sec:formulation}. 

To see how this is done, first consider  $\tilde{\mathbf{D}}(t,\mathbf{0};0)$.  We project the set $\mathbf{R}_{ext}(t,\mathbf{0};0)$ on to the  ($x_4$, $x_5$) plane, relabel the coordinates from $x_4 \to (x_1-x_4(0))$, $x_5 \to (x_2-x_5(0))$, and $x_6 \to (x_3-x_6(0))$, take the Minkowski sum of the projection with a circle of radius $2r$ (to account for the geometry of both systems), and then extrude the set in the $(x_3-x_6(0))$ dimension. The result of these operations are visualized in Figure \ref{figure:gtilde} for $t=5.0$. Finally, we compute the time-varying backwards avoid-tube of $\tilde{\mathbf{D}}(t,\mathbf{0};0)$, denoted $\mathbf{A}_{int}(t, \tilde{\mathbf{D}},\mathbf{0};0,5.0)$. This set is computed again using the Level Set toolbox, and is displayed in Figure \ref{figure:brt} at $t=0$.

Interpreting the set $\mathbf{A}_{int}(0, \tilde{\mathbf{D}},\mathbf{0};0,5.0)$, we see that it makes intuitive sense. The super-zero-level set (free space outside the green volume in Figure \ref{figure:brt}) is the set of all initial configurations of the internal and external systems, such that the internal system can escape collision despite adversarial control by the external system and without observations of the external system. The regions that are unable to avoid collision are those in which the initial configuration of the two systems are already in collision, or when the internal system is initially headed towards the external system, and cannot turn away in time due to the limited control authority on $u_2$ (note that the set wraps at $(x_3 - x_6(0)) = \pm \pi $ due to the periodicity of the relative heading). Because the bounds on $u_1$ are greater than those on $u_3$, the internal system can escape collision from all other initial configurations. 

While the result of the computations thus far in this example are sufficient to synthesize a controller which is able to avoid obstacles after observations are lost, it has been assumed that the obstacle to be avoided was observed before observations were lost. However, in dynamic environments, there are often situations in which no obstacles might be detected, such as when an obstacle is beyond the sensing range of the ego vehicle. Such a situation can also be accounted for using the method presented here. Specifically for this example, we first compute the set $\mathbf{R}_{ext}(t,\mathbf{S}_{sense};0)$, which is defined to be the forward reachable set of the external system when starting from any state beyond the ego-vehicle's sensing radius (relative to the origin): \begin{equation}
    \mathbf{S}_{sense} := \{\begin{bmatrix}x_4 & x_5 & x_6 \end{bmatrix}^\intercal : x_4^2 + x_5^2 \geq r_{sense}^2\}.
\end{equation} Here $r_{sense}$ is the sensing radius.

Code for generating the sets presented in this example is provided at \url{https://www.redacted_for_double_blind.review}.

\subsection{Notes on Computation} \label{sec:computation}

At the root of this method is the offline computation of the numerical solution of two partial differential equations. For PDEs without special structure, computing the numerical solution requires defining a grid over the state space and computing spatial derivatives at every grid point; the computation time thus scales exponentially with the dimension of the state space \cite{bansal2017hamilton}. In practice, we are limited in the dimension of systems for which we can compute these safety guarantees by the computation required to do so. Nevertheless, with clever representation of the systems, many interesting problems more complicated than the one presented here can be solved using this framework. 

Another noteworthy fact is that space discretization can introduce errors in the computation of the value functions of interest, although the discretization error tends to zero as bin size tends to zero \cite{evans1959pde}. Some artifacts from the discretization error can be seen in Figure \ref{figure:brt}, particularly in places where the set \textit{creases}. The possibility that these artifacts may appear should be taken into account when constructing safety arguments for systems using this method.

The computational burden at run-time is minimal, and requires simply looking up the pre-computed value and corresponding maximizing control from the grid, which is indexed by the configuration of the internal and external systems. The time taken by this operation is on the order of miliseconds.  

\section{Related Work}
The framework and method presented in this article stems most directly from the large body of work which has leveraged the Hamilton-Jacobi (HJ) equations to transform optimal control problems into numerically solvable partial differential equations. An overview of HJ optimal control formulations is given in \cite{bansal2017hamilton}. The Hamilton-Jacobi-Isaacs (HJI) equations similarly allow for transforming fully observable differential games into partial differential equations, as demonstrated in \cite{evans1984differential}, \cite{evans1959pde} \cite{fisac2015reach}. The work presented in this article is a differential game of sorts, although due to the impersistent observability of the external system, the standard HJI pursuit-evasion formulation does not apply, necessitating the method we have developed. The set of all safe initial configurations in the persistently observed case is a strict super-set of the eyes-closed safety kernel defined here. 

There are recent works that have leveraged HJ and HJI equations to capture other interesting information patterns arising in problems common to autonomous systems. For example, in \cite{herbert2017fastrack},  \cite{kousik2017safetrajectorysynthesis} and \cite{kousik2019safeaggressive}, a dynamic game is formulated to capture the mismatch between a motion planning model and a high-fidelity model of the system attempting to track the plans. This allows for planning trajectories that are guaranteed to be trackable by the high-fidelity model. These works are extended in \cite{fridovich2018planning} and \cite{kousik2018bridging}, allowing for limited sensing radii. Meanwhile, \cite{vaskov2019towardsprovablynotatfault} leverages a similar idea to generate trajectories that incorporate system tracking error and will not have any at-fault collisions in unforeseen environments with limited sensing radius. All of these methods assume that the internal system has persistent observations of all obstacles, at least within some sensing radius. The methods developed in these works do not apply when observation is lost. 

Other works involve different considerations of the perception system in autonomous vehicles. In \cite{falanga2019howfastistoofast}, for example, the authors consider the role of perception latency in safe avoidance of obstacles. Perhaps most similar to our work, from a motivation standpoint, is  \cite{Saxena2017LearningRF}, in which the authors attempt to learn a system that simultaneously detects perception failures and generates corrective action. However, that method does not include any formal safety analysis of the control actions that are generated. Furthermore, because such a system is learned from data, it  may unfortunately encounter the same failure modes that such a method attempts to be robust to in the first place. 

Recently, \cite{dean2019robust} formulated a robust control problem accounting for a learned perception module, although there the analysis requires that the dynamics of the autonomous system is a linear, time invariant ordinary differential equation. Furthermore, the high-dimensional inputs to the perception system, at test-time, are assumed to come from the same distribution of inputs at train-time. Therefore, the method presented there addresses a much more structured situation than the one considered here. 

It is also important to note that although we leverage exact reachable set computations to formulate our problem, techniques other than solving HJ PDEs can be used to find them. For example, \cite{frazzoli2002real} use RRT-type sampling to approximate such sets. Again, in that work, the obstacle avoidance and reachable sets considered do not represent the loss of observability situation we consider here. 

There are of course many works which have considered obstacle avoidance in the presence of noisy or uncertain measurements, such as \cite{du2011probabilistic}, \cite{althoff2012safety}, and \cite{Schmerling2017}, to cite just a few examples. Furthermore, there have been works which have tackled this problem in the context of formal methods, providing formal guarantees of safety in such observation models, as in \cite{mitsch2018verified}. However, again, consideration of noisy or purely partial observations is a different problem than the problem we address in this paper, and those works should be taken as complementary to the work presented here. 

\nocite{Mitchell2005}

\section{Conclusion}
We have presented a framework for optimally accounting for detected loss of observations of autonomous systems. We showed that in order to be robust to the situation defined, the control of the internal system must assume it will lose and never regain observations of the external system. Therefore, the internal system must always stay within a set of relative positions with respect to the external system, to ensure that even if the observations are in fact lost, there still exists a safe maneuver that avoids all possible trajectories of the external system. 

We first formulated this problem as an open-loop differential Stackelberg game, and then showed how it can be recast in a manner that allows its solution to be found by solving two related partial differential equations. 

This method results in a safe operating region for the autonomous system; here, we call it the Eyes-Closed Safety Kernel. If the autonomous system is able to detect instantaneously that it has lost observations, then by staying within the Eyes-Closed Safety Kernel for all time, and executing the control defined in (\ref{eq:opt_ctrl}) whenever observations are lost, the system is guaranteed to maintain safety as long as the system obeys the modeling assumptions made. 

Although the detection of perception failure is a very challenging problem in practice, the framework presented here demonstrates that it is possible to safely account for these failures if they can be detected. This method therefore motivates further research on developing better methods for predicting when perception modules are incorrect, and can already be used with existing methods for predicting perception failures. Even if not every perception failure is detected, the framework we present provides a means for guaranteeing robustness to those that are predicted. 










\bibliographystyle{./IEEEtran} 
\bibliography{root.bib}

\end{document}